\newtheorem{theorem}{Theorem}
\newcommand{\be}{\begin{equation}}
\newcommand{\ee}{\end{equation}}
\newcommand{\bea}{\begin{eqnarray}}
\newcommand{\eea}{\end{eqnarray}}
\newcommand{\bw}{\begin{eqnarray*}}
\newcommand{\ew}{\end{eqnarray*}}
\title{Tail Index for a Distributed Storage System with Pareto File Size Distribution}
\begin{document}
\author{Vaneet Aggarwal and Tian Lan\thanks{V. Aggarwal is with the School of IE, Purdue University, West Lafayette, IN 47907 (email: vaneet@purdue.edu). T. Lan is with the Department of ECE, George Washington University, DC 20052 (email: tlan@gwu.edu).}}
\date{\vspace{-5ex}}

\maketitle

\begin{abstract}

Distributed storage systems often employ erasure codes to achieve high data reliability while attaining space efficiency. Such storage systems are known to be susceptible to long tails in response time. It has been shown that in modern online applications such as Bing, Facebook, and Amazon, the long tail of latency is of particular concern, with $99.9$th percentile response times that are orders of magnitude worse than the mean. Taming tail latency is very challenging in erasure-coded storage systems since quantify tail latency (i.e., $x$th-percentile latency for arbitrary $x\in[0,1]$) has been a long-standing open problem. In this paper, we propose a mathematical model to quantify {\em tail index} of service latency for arbitrary erasure-coded storage systems, by characterizing the asymptotic behavior of latency distribution tails. When file size has a heavy tailed distribution, we find tail index, defined as the exponent at which latency tail probability diminishes to zero, in closed-form, and further show that a family of probabilistic scheduling algorithms are (asymptotically) optimal since they are able to achieve the exact tail index.

\end{abstract}

\begin{IEEEkeywords}Distributed Storage, Pareto Distribution, Probabilistic Scheduling, Tail Index
\end{IEEEkeywords}
\section{Introduction}

Modern storage systems, such as those developed by Facebook \cite{Sathiamoorthy13}, Microsoft \cite{Asure14} and Google \cite{Fikes10}, increasingly adopt erasure coding to achieve more efficient use of storage capacity while maintaining high reliability guarantees. These distributed storage systems are known to be susceptible to long tails in response time. It has been shown that in modern Web applications such as Bing, Facebook, and Amazon's retail platform, the long tail of latency is of particular concern, with $99.9$th percentile response times that are orders of magnitude worse than the mean \cite{T1,T2}. Despite mechanisms such as load-balancing and resource management, still evaluations of large scale systems indicate that there is a high degree of randomness in delay performance \cite{Liang:2014}. For distributed storage systems that use erasure coding, quantifying tail latency (i.e., $x$th-percentile latency for arbitrary $x\in[0,1]$) has been a long-standing open problem. 

This paper proposes an analytical framework to quantify {\em tail index} of service latency for arbitrary erasure-coded storage systems, by characterizing the asymptotic behavior of latency distribution tails. Existing work has mainly focused on providing bounds for mean service latency \cite{ISIT:12,Joshi:13,MDS-Queue,Xiang:2014:Sigmetrics:2014,Yu_TON,Yu-TNSM16,Yu-TCC16,Yu-TON16,Sprout,Yu-ICDCS,Yu-CCGRID} and fails to address the issue of latency tails. However, the overall response time in erasure coded data-storage systems is dominated by the long tail distribution of the parallel operations \cite{Barroso:2011,Barroso:20112,Tail_TON,Jingxian}. Evaluation of practical systems show that the latency spread is significant even when data object sizes are in the order of megabytes \cite{Liang:2014}. The lack of mathematical models prevents design and optimization of erasure-coded storage with the goal of keeping latency consistently low (rather than just reducing the mean) and meeting customer delay Service Level Agreements (SLAs).

Using an $(n,k)$ erasure code, a file is encoded into $n$ data chunks, allowing reconstruction from any subset of $k<n$ distinct chunks. Quantifying service latency for heavy-tailed files is an open problem because of the challenge of jointly analyzing dynamic scheduling (i.e., $n$-choose-$k$ optimization for each request on the fly and {\em w.r.t.} data locality and network status) and the dependency of chunk access time on shared storage nodes. For mean latency analysis of homogeneous files, {\em Fork-join queue analysis} in \cite{Makowski:89, JK13, Joshi:13, KumarTC14} provides upper bounds for mean service latency by forking each file request to all storage nodes and removing it after enough chunks are processed. In a separate line of work, {\em Queuing-theoretic analysis} in \cite{ISIT:12,MG1:12,MDS-Queue} proposes a {\em block-$t$-scheduling} policy that only allows the first $t$ requests at the head of the buffer to move forward, and finds an upper bound of the mean latency through Markov-chain analysis of the queuing model.  However, both approaches fall short of quantifying tail latency, because states of the corresponding queuing model must encapsulate not only a snapshot of the current system including chunk placement and queued requests, but also past history of how chunk requests have been processed by individual nodes. This leads to a {\em state explosion} problem as practical storage systems usually handle a large number of files and nodes. Later, mean latency bounds for arbitrary service time distribution and file configurations are provided in \cite{Xiang:2014:Sigmetrics:2014,Yu_TON} using order statistic analysis and probabilistic request scheduling policy. Recent work has also given approaches to understand the tail latency using probabilistic request scheduling policy \cite{Jingxian,Tail_TON}.


In this paper, we focus on analyzing the asymptotic behavior of latency tail distributions in erasure-coded storage systems, when file size has a heavy tail distribution motivated by heavy-tailed file size distribution in local file systems and in the World Wide Web \cite{948888,Vaphase,Gong01onthe}. To quantify tail latency, we make use of probabilistic scheduling develop in \cite{Xiang:2014:Sigmetrics:2014,Yu_TON}. Upon the arrival of each file request, we randomly dispatch a batch of $k$ chunk requests to $k$-out-of-$n$ storage nodes with some predetermined probabilities. Then, each storage node manages its local queue independently and continues processing requests in order. A file request is completed if all its chunk requests exit the system. This probabilistic scheduling policy allows us to analyze the (marginal) queuing delay distribution of each storage node and then combine the results to obtain a lower bound on the asymptotic  tail latency of any distributed storage system. 

In particular, we consider tail index, defined as the exponent at which latency tail probability diminishes to zero, i.e., $-\log {\rm Pr}(L\ge x)/ \log(x)$ as threshold $x$ grows large. When file size follows a Pareto distribution  and unit service time follows an exponential distribution, we employ Laplace-Stieltjes transform to solve in closed form the latency tail probability for processing a chunk request at any single server. Utilizing this result, we prove that tail index of erasure-coded storage systems is upper bounded by $alpha-1$, where $\alpha$ is the exponent of Pareto-distributed file size. We further show that this upper bound is indeed achievable via probabilistic scheduling policy. In order words, a family of probabilistic scheduling algorithms achieves the best tail index and are optimal with respect to asymptotic latency tails. 

The main contributions of this paper are summarized as follow:
\begin{itemize}
\item We propose an analytical framework to quantify tail index of service latency for arbitrary erasure-coded storage systems.
\item For Pareto-distributed file size (with shape parameter $\alpha>2$ ) and exponential service time, we prove that the optimal tail index of  erasure-coded storage systems is $\alpha-1$.
\item We show that a family of probabilistic scheduling algorithms are able to achieve the tail index and therefore are asymptotically optimal in terms of latency tails.
\end{itemize}

The rest of the paper is organized as follows. Section II gives the system model for the problem. Section III finds  the tail index of the waiting time from each server. Section IV uses the tail index from each server to show that the optimal tail index of distributed storage system is achieved by probabilistic scheduling. Finally, Section V concludes this paper. 
\section{System Model and Formulation}

We consider a data center consisting of $n$ heterogeneous storage
servers, denoted by ${\cal N} = \{1, 2, \cdots , n\}$, called storage nodes in this paper. There are $r$ files stored in the data-center using an $(n,k)$ erasure code. That is, each file is split into $k$ equal-size chunks and then encoded into $n$ coded chunks placed on different storage nodes. Thus, any file can be accessed by retrieving $k$ distinct data chunks. 

The arrival of client requests for each file $i$ of size $k L_i$ Mb is assumed to form an independent Poisson process with a known
rate $\lambda_i$. We assume that the chunk size $L_i$ Mb has a heavy tail and follows a Pareto distribution with parameters $(x_m,\alpha)$ with shape parameter $\alpha>2$  (implying finite mean and variance). Thus, the complementary cumulative distribution function (c.d.f.) of the chunk size is given as
\begin{equation}
\Pr(L_i>x)=\begin{cases}
(x_m/x)^\alpha \quad x\ge x_m\\
0 \quad x<x_m
\end{cases}
\end{equation}
For $\alpha>1$, the mean is $E[L_i] = \alpha x_m/(\alpha-1)$. The service time per Mb at server $j$, $X_j$ is distributed as an exponential distribution the mean service time  $1/\mu_j$.

We will focus on the tail index of the waiting time to access each file. In order to understand the tail index, let the waiting time for the files $T_W$ has $\Pr(T_W)>x)$ of the order of $x^{-d}$ for large $x$, then the tail index is $d$. More formally, the tail index $d$ is defined as $\lim_{x\to\infty}\frac{-\log \Pr(T_W>x)}{\log x}$. This index gives the slope of the tail in the log-log scale of the complementary c.d.f. 

To prove achievability of the tail index, we will consider a class of probabilistic scheduling policies \cite{Yu_TON}, which are employed in mean latency analysis and shown to provide a tight bound for arbitrary erasure code and service time distribution. In particular, under probabilistic scheduling, upon the arrival of a file $i$ request,  we randomly dispatch the batch of $k$ chunk requests to $k$ out of $n$ storage nodes in ${\cal N}$, denoted by a subset $\mathcal{A}_i\subseteq {\cal N}$ (satisfying  $|\mathcal{A}_i|=k$) with predetermined probabilities. Then, each storage node manages its local queue independently and continues processing requests in order. A file request is completed if all its chunk requests exit the system. It was further shown in \cite{Yu_TON} that a probabilistic scheduling policy with feasible probabilities $\{\mathbb{P}(\mathcal{A}_i): \ \forall i,\mathcal{A}_i\}$ exists if and only if there exists conditional probabilities $\{\pi_{i,j}\in [0,1], \forall i,j\}$ satisfying
\begin{eqnarray}
\sum_{j=1}^n \pi_{i,j} = k \ \forall i.
\end{eqnarray}

These probabilities $\pi_{i,j}$ indicate the probability with with file $i$ is requested from node $j$. It is easy to verify that under our model, the arrival of chunk requests at node $j$ form a Poisson Process with rate $\Lambda_j=\sum_i \lambda_i\pi_{i,j}$, which is the superposition of $n$ Poisson processes each with rate $\lambda_i\pi_{i,j}$. In order to show that probabilistic scheduling achieves the optimal tail index in this paper, we will only use $\pi_{ij}= k/n$ which denotes uniform access of files from all the $n$-choose-$k$ subsets. We assume that this value of $\pi_{ij}$ will lead to stable queues, such that arrival rate $\Lambda_j$ at each server is smaller than the service rate $\mu_j$ at each server.


\section{Waiting Time Distribution for a Chunk from a Server}

In this Section, we will characterize the Laplace Stieltjes transform of the waiting time distribution from a server, assuming that the arrival of requests at a server is Poisson distributed with mean arrival rate $\Lambda_j$.  We first note that the service time per chunk on server $j$ is given as $B_j = X_j L_i$, where $L_i$ is distributed as Pareto Distribution given above, and $X_j$ is exponential with parameter $\mu_j$.

Using this, we find that
\begin{eqnarray}
&&\Pr(B_j<y) \nonumber\\&=& \Pr(X_j L_i <y)\nonumber\\
&=& \int_{x=x_m}^\infty\Pr(X_j<y/x)\alpha x_m^\alpha\frac{1}{x^{\alpha+1}}  dx\nonumber\\
&=& \int_{x=x_m}^\infty\left(1-\exp(-\mu_jy/x)\right)\alpha x_m^\alpha\frac{1}{x^{\alpha+1}}  dx\nonumber\\
&=& 1 - \int_{x=x_m}^\infty \exp(-\mu_jy/x)\alpha x_m^\alpha\frac{1}{x^{\alpha+1}}  dx
\end{eqnarray}
Substitute $t = \mu_j y/x$, and then $dt = -\mu_j y/x^2 dx$. Thus,
\begin{eqnarray}
&&\Pr(B_j>y) \nonumber\\&=&  \int_{x=x_m}^\infty \exp(-\mu_jy/x)\alpha x_m^\alpha\frac{1}{x^{\alpha+1}}  dx\nonumber\\
&=&  \int_{t=0}^{\mu_j y/x_m} \exp(-t)\alpha x_m^\alpha \frac{t^{\alpha-1}}{(\mu_j y)^\alpha} dt\nonumber\\
&=&  \alpha (x_m/\mu_j)^\alpha\frac{1}{ y^\alpha} \int_{t=0}^{\mu_j y/x_m} \exp(-t) {t^{\alpha-1}}  dt\nonumber\\
&=&  \alpha (x_m/\mu_j)^\alpha \gamma(\alpha,\mu_jy/x_m)/y^\alpha,
\end{eqnarray}
 where $\gamma$ denote lower incomplete gamma function, given as $\gamma(a,x)=\int_0^x u^{a-1}\exp(-u) du$.
 
 Since $\Pr(B_j>y) = L(y)/y^\alpha$, where $L(y) = \alpha (x_m/\mu_j)^\alpha \gamma(\alpha,\mu_jy/x_m)$ is a slowly varying function, the asymptotic of the waiting time in heavy-tailed limit can be calculated using the results in \cite{olvera2011transition} as
 
 \begin{equation}
 \Pr(W>x) \approx \frac{\Lambda}{1-\rho} \frac{x^{1-\alpha}}{\alpha-1} L(x).
 \end{equation}

Thus, we note that the waiting time from a server is heavy-tailed with tail-index $\alpha-1$. Thus, we get the following result. 
\begin{theorem}
	Assume that the arrival rate for requests is Poisson distributed, service time distribution is exponential and the chunk size distribution is Pareto with shape parameter $\alpha$. Then,  the tail index for the waiting time of chunk in the queue of a server  is $\alpha-1$.
\end{theorem}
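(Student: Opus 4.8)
The idea is to recognize the queue at server $j$ as an M/G/1 queue and to recover the tail of its stationary waiting time $W$ from the tail of the per‑chunk service time $B_j=X_j L_i$ via the classical heavy‑tailed Pollaczek--Khinchine asymptotics. Since chunk requests arrive at server $j$ as a Poisson process (a thinned superposition, with rate $\Lambda_j$), and service times $B_j$ are i.i.d., this reduction is legitimate once we verify that $B_j$ is regularly varying and that the queue is stable. The real computational content is the explicit form of $\Pr(B_j>y)$.

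\textbf{Step 1 (service‑time tail).} First I would compute $\Pr(B_j>y)$ by conditioning on the Pareto chunk size $L_i$ and integrating the exponential tail of $X_j$:
\begin{equation}
\Pr(B_j>y)=\int_{x_m}^{\infty} e^{-\mu_j y/x}\,\alpha x_m^{\alpha}x^{-\alpha-1}\,dx .
\end{equation}
The substitution $t=\mu_j y/x$ turns this into $\alpha (x_m/\mu_j)^{\alpha}y^{-\alpha}\,\gamma(\alpha,\mu_j y/x_m)$ with $\gamma$ the lower incomplete gamma function. Writing $\Pr(B_j>y)=L(y)\,y^{-\alpha}$ with $L(y)=\alpha (x_m/\mu_j)^{\alpha}\gamma(\alpha,\mu_j y/x_m)$, I would note that $\gamma(\alpha,u)\to\Gamma(\alpha)$ as $u\to\infty$, so $L$ is bounded and converges to the positive constant $\alpha (x_m/\mu_j)^{\alpha}\Gamma(\alpha)$; in particular $L$ is slowly varying and $B_j$ is regularly varying with index $-\alpha$.

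\textbf{Step 2 (waiting‑time tail).} Next I would invoke the heavy‑tailed M/G/1 waiting‑time asymptotic (cf.\ \cite{olvera2011transition}): when $\Pr(B_j>x)=L(x)x^{-\alpha}$ with $\alpha>1$ and the queue is stable, $\Pr(W>x)\sim \frac{\Lambda_j}{1-\rho_j}\,\frac{x^{1-\alpha}}{\alpha-1}\,L(x)$ as $x\to\infty$, where $\rho_j=\Lambda_j E[B_j]<1$. The hypothesis $\alpha>2$ gives $E[L_i]<\infty$, hence $E[B_j]=E[X_j]\,E[L_i]<\infty$, so the mean service time needed for both stability and this asymptotic is finite, and the standing stability assumption yields $\rho_j<1$. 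Substituting the $L$ from Step~1 gives the displayed estimate for $\Pr(W>x)$.

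\textbf{Step 3 (extract the tail index).} Taking logarithms,
\begin{equation}
\frac{-\log\Pr(W>x)}{\log x}=(\alpha-1)-\frac{\log\!\left(\tfrac{\Lambda_j}{(1-\rho_j)(\alpha-1)}\,L(x)\right)}{\log x}+o(1),
\end{equation}
and since $L(x)$ tends to a positive constant we have $\log L(x)/\log x\to 0$, so the limit equals $\alpha-1$, which is exactly the claimed tail index. The main obstacle is the rigorous justification of Step~2: one must check that the integrated‑tail distribution of $B_j$ is subexponential. This holds because a tail regularly varying with index $-\alpha$, $\alpha\in(1,\infty)$, is subexponential and, by Karamata's theorem, its integrated tail is regularly varying with index $-(\alpha-1)<0$; hence the cited single‑server heavy‑traffic result applies verbatim, and everything else reduces to the elementary integral of Step~1 together with a slowly‑varying‑function estimate.
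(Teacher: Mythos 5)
Your proposal is correct and follows essentially the same route as the paper: condition on the Pareto chunk size to compute $\Pr(B_j>y)=\alpha(x_m/\mu_j)^\alpha\gamma(\alpha,\mu_j y/x_m)/y^\alpha$, write this as $L(y)y^{-\alpha}$ with $L$ slowly varying, invoke the heavy-tailed M/G/1 waiting-time asymptotic from \cite{olvera2011transition}, and read off the tail index $\alpha-1$. You supply a few details the paper leaves implicit (the limit $L(x)\to\alpha(x_m/\mu_j)^\alpha\Gamma(\alpha)$, the subexponentiality/regular-variation check needed to apply the cited result, and the explicit $\log$-quotient calculation), but the underlying argument is the same.
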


\section{Probabilistic Scheduling Achieves Optimal Tail Index}

Having characterized the tail index of a single server with Poisson arrival process and Pareto distributed file size, we will now give the tail index for a general distributed storage system. The first result is that any distributed storage system has a tail index of at most $\alpha-1$.

\begin{theorem}
	The tail index for distributed storage system is at most $\alpha-1$. 
\end{theorem}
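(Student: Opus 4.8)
The plan is to prove Theorem~2 as a converse: no scheduling policy can make the file‑latency tail decay faster than $x^{-(\alpha-1)}$, so it suffices to exhibit a lower bound on $\Pr(T_W>x)$ that is regularly varying with index $-(\alpha-1)$, matching the single‑server result of Theorem~1. The starting point is a sample‑path reduction. For any policy, a tagged file request completes only once all $k$ of its chunk requests have departed, so conditioned on the event that the request uses server $j^\ast$ (i.e.\ $j^\ast\in\mathcal{A}_i$) its latency satisfies $T_W\ge Y_{j^\ast}$, the sojourn time of that chunk at $j^\ast$. Since the aggregate chunk‑arrival rate $\sum_j\Lambda_j=\sum_i k\lambda_i$ is a fixed positive constant independent of the routing rule (because $\sum_j\pi_{i,j}=k$), the most‑loaded server $j^\ast$ satisfies $\Lambda_{j^\ast}\ge(k/n)\sum_i\lambda_i>0$, and in particular a positive fraction $c>0$ of all file requests use it. Hence $\Pr(T_W>x)\ge c\,\Pr(Y_{j^\ast}>x)$, and the constant $c$ does not affect the tail index.

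The second step is to lower‑bound $\Pr(Y_{j^\ast}>x)$ by a single‑server waiting‑time tail. The sojourn time at $j^\ast$ is at least the waiting time there, which should stochastically dominate (and equals in distribution, for routing that ignores instantaneous queue states) the stationary waiting time $W_{j^\ast}$ of the M/G/1 queue driven by the chunk stream into $j^\ast$ with per‑chunk service time $B_{j^\ast}=X_{j^\ast}L$. From Section~III, $\Pr(B_{j^\ast}>y)=L_{j^\ast}(y)/y^\alpha$ with $L_{j^\ast}$ slowly varying, so the asymptotic of \cite{olvera2011transition} used for Theorem~1 gives $\Pr(W_{j^\ast}>x)\sim\frac{\Lambda_{j^\ast}}{1-\rho_{j^\ast}}\frac{x^{1-\alpha}}{\alpha-1}L_{j^\ast}(x)$, which is regularly varying with index $-(\alpha-1)$. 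Chaining the two steps, $\limsup_{x\to\infty}\frac{-\log\Pr(T_W>x)}{\log x}\le\lim_{x\to\infty}\frac{-\log\Pr(W_{j^\ast}>x)}{\log x}=\alpha-1$, which is the claimed bound.

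The main obstacle is making the second step rigorous for a fully arbitrary scheduling policy: if routing or service ordering may depend on instantaneous queue states, the stream into $j^\ast$ need not be Poisson and a chunk need not wait for all work ahead of it, so Theorem~1 does not apply verbatim. I would handle this in one of two ways. The clean option restricts attention to policies whose routing and ordering are independent of instantaneous queue states; then Poisson thinning keeps the substream into $j^\ast$ Poisson, Theorem~1 applies directly with $\Lambda=\Lambda_{j^\ast}$, and this already covers the probabilistic scheduling family of Section~IV (so the converse and the achievability result are about the same class). The robust option uses a genuinely policy‑free residual‑work bound: a chunk request that arrives while $j^\ast$ is busy cannot depart before the in‑service chunk's residual service time elapses, and the residual of a service time with tail $\Theta(y^{-\alpha})$ up to a slowly varying factor is itself regularly varying with index $-(\alpha-1)$, giving $\Pr(T_W>x)\ge c\,\Pr(j^\ast\text{ busy})\,\Pr(B_{j^\ast}^{\mathrm{res}}>x)$ with the same exponent. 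I would present the residual‑work argument as the main proof and note that only the leading constant, not the tail index, changes relative to Theorem~1.
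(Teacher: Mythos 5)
Your approach takes a genuinely different route from the paper. The paper proves the converse via a \emph{genie} upper bound on latency: pool all $n$ servers into a single server of rate $\sum_{j}\mu_j$ per Mb and declare a request complete as soon as a single chunk finishes. The genie's input is then the aggregate file-request stream, which is Poisson with rate $\sum_i\lambda_i$ \emph{regardless of the scheduling policy}, so the single-server M/G/1 analysis of Section~III (Theorem~1) applies verbatim and gives the genie a tail index of exactly $\alpha-1$. Since no real policy can outperform the genie, $\Pr(T_W>x)\ge \Pr(T_W^{\mathrm{genie}}>x)$, which is regularly varying with index $-(\alpha-1)$, and the claim follows. You instead argue from below by isolating a single bottleneck server $j^\ast$ and writing $\Pr(T_W>x)\ge c\,\Pr(Y_{j^\ast}>x)$.

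The structural difference matters: the genie construction removes exactly the policy dependence that causes the trouble in your second step. Your argument needs a well-defined long-run arrival rate $\Lambda_{j^\ast}$ and an arrival-epoch queue-state distribution at $j^\ast$, both of which presuppose state-independent routing (your use of $\pi_{i,j}$ already bakes this in). As you note, the ``clean option'' therefore proves the converse only for the probabilistic-scheduling family --- a weaker statement than the theorem, and one that makes the converse circular with the achievability result. The ``robust'' residual-work option does not actually close the gap either: the bound $\Pr(T_W>x)\ge c\,\Pr(j^\ast \mbox{ busy})\Pr(B^{\mathrm{res}}_{j^\ast}>x)$ silently substitutes the stationary time-average residual for the residual seen by a tagged arriving chunk, and equating the two is again a PASTA-type step that fails under state-dependent routing (a policy that steers chunk requests toward $j^\ast$ preferentially when its residual work is small will produce an arrival-seen residual that is strictly lighter than the equilibrium one). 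So the residual-work patch still needs a policy-free argument that a positive fraction of arrivals at $j^\ast$ see a regularly-varying residual. The genie decomposition gives you that for free, since the ``server'' whose residual matters is fed by the raw Poisson file arrivals; I would replace the bottleneck argument with it.
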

\begin{proof}
In order to show this result,  consider a genie server which is combination of  all the $n$ servers together. The service rate of this server is $\sum_{j=1}^n \mu_i$ per Mb. As a genie, we also assume that only one chunk is enough to be served. In this case, the problem reduces to the single server problem with Poisson arrival process and the result in Section IIII shows that the tail index is $\alpha-1$. Since even in the genie-aided case, the tail index is $\alpha-1$, we cannot get any higher tail index. 
\end{proof}

The next result shows that the probabilistic scheduling achieves the optimal tail index.

\begin{theorem}
The optimal tail index of $\alpha-1$  is achieved by probabilistic scheduling. 
\end{theorem}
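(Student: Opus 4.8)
The plan is to express the file-access latency under probabilistic scheduling as a maximum of per-server sojourn times and then control its tail by a union bound, so that the polynomial exponent established in Theorem~1 carries over to the whole system; matched against the converse of Theorem~2 this pins the tail index at exactly $\alpha-1$.

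First I would instantiate the policy with the conditional probabilities $\pi_{i,j}=k/n$ permitted by the model. With this choice the chunk-request stream at node $j$ is Poisson of rate $\Lambda_j=(k/n)\sum_i\lambda_i<\mu_j$, and the per-chunk service times $B_j=X_jL_i$ are i.i.d.\ across requests, so node $j$ is a stable M/G/1 queue to which Theorem~1 applies directly: its stationary waiting time $W_j$ satisfies $\Pr(W_j>x)\sim c_j\,x^{1-\alpha}L_j(x)$ with $L_j$ slowly varying, and by PASTA a tagged arriving chunk experiences this same distribution. Since a file request departs the system only once all $k$ of its chunk requests have been served, its latency is $T_W=\max_{j\in\mathcal{A}_i}(W_j+B_j)$, the maximum over the $k$ sampled nodes of the chunk sojourn times.

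Next I would argue that each sojourn time $W_j+B_j$ still has tail index $\alpha-1$. From the closed form in Section~III, $\Pr(B_j>y)=\alpha(x_m/\mu_j)^\alpha\gamma(\alpha,\mu_jy/x_m)/y^\alpha$ is regularly varying of index $-\alpha$ and hence has a strictly lighter tail than $W_j$, whose index is $-(\alpha-1)$; because $W_j$ and the current request's $B_j$ are independent and both long-tailed, the convolution-closure property of regularly varying (subexponential) laws yields $\Pr(W_j+B_j>x)\sim\Pr(W_j>x)$. I would then take a union bound over the at most $n$ sampled servers, which is legitimate irrespective of how the $W_j$'s across distinct servers are correlated:
\begin{equation}
\Pr(T_W>x)\ \le\ \sum_{j=1}^{n}\Pr\big(W_j+B_j>x\big)\ \le\ C\,x^{1-\alpha}L(x)
\end{equation}
for all large $x$, with $L$ slowly varying. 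Dividing $-\log\Pr(T_W>x)$ by $\log x$ and using $\log L(x)/\log x\to0$ gives $\liminf_{x\to\infty}\frac{-\log\Pr(T_W>x)}{\log x}\ge\alpha-1$; combined with the bound of Theorem~2, the limit exists and equals $\alpha-1$, establishing asymptotic optimality.

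The step I expect to be the main obstacle is the bookkeeping of the slowly varying factors through the two operations that could a priori inflate the exponent: summing the finitely many tails $c_jx^{1-\alpha}L_j(x)$ in the union bound (one must check the sum is again of the form $\mathrm{const}\cdot x^{1-\alpha}L(x)$ with $L$ slowly varying, which is easy here because each $L_j$ is asymptotically constant, $\gamma(\alpha,\cdot)\to\Gamma(\alpha)$), and the convolution $W_j+B_j$ (which needs the regularly-varying convolution lemma so that the lighter $-\alpha$ tail of $B_j$ is absorbed rather than contributing). A minor additional point is justifying rigorously that the tagged-request queueing delay is governed by the stationary M/G/1 waiting-time law, which follows from the assumed stability of the queues together with PASTA.
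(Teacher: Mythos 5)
Your argument follows the paper's proof essentially step for step: instantiate $\pi_{i,j}=k/n$, invoke Theorem~1 for the per-server waiting-time tail, write the file latency as the maximum over the $k$ contacted servers of waiting-plus-service, apply the union bound $\Pr(\max_j A_j>x)\le\sum_j\Pr(A_j>x)$, and observe that a finite sum of regularly varying tails of index $-(\alpha-1)$ retains index $-(\alpha-1)$. You are merely more explicit than the paper on two points it handles informally, namely the convolution-closure argument showing the lighter $-\alpha$ tail of $B_j$ does not raise the exponent of $W_j+B_j$, and the bookkeeping of the slowly varying factors when combining with the converse of Theorem~2.
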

\begin{proof}
In order to show that probabilistic scheduling achieves this tail index, we consider the simple case where all the $n$-choose-$k$ sets are chosen equally likely for each file. Using this, we note that each server is accessed with equal probability of $\pi_{ij}=k/n$. Thus, the arrival rate at the server is Poisson and the tail index of the waiting time at the server is $\alpha-1$.

The overall latency  of a file chunk is the sum of the queue waiting time and the service time. Since the service time has tail index of $\alpha$, the overall latency for a chunk is $\alpha-1$.  Probability that latency is greater than $x$ is determined by the $k^{th}$ chunk to be received. The probability is upper bounded by the sum of probability over all servers that waiting time at a server is greater than $x$. This is because $\Pr(\max_j(A_j)\ge x)\le \sum_j \Pr(A_j\ge x)$ even when the random variables $A_j$ are correlated.  Finite sum of terms, each will tail index $\alpha-1$ will still give the term with tail index $\alpha-1$ thus proving that the tail index with probabilistic scheduling is $\alpha-1$. 
\end{proof}

We note that even though we assumed a total of $n$ servers, and the erasure code being the same, the above can be extended to the case when there are more than $n$ servers with uniform placement of files and each file using different erasure code. The upper bound argument does not change as long as number of servers are finite. For the achievability with probabilistic scheduling, we require that the chunks that are serviced follow a Pareto distribution with shape parameter $\alpha$. Thus, as long as placed files on each server are placed with the same distribution and the access pattern does not change the nature of distribution of accessed chunks from a server, the result holds in general.
\section{Conclusions}

In this paper, we prove that the tail index of arbitrary erasure-coded storage systems is $\alpha-1$, where the  file size follows a Pareto distribution with exponent $\alpha$ and the service time follows an exponential distribution. To the best of our knowledge, this is the first mathematical framework to analyze and quantify tail latency for arbitrary erasure-coded storage systems. Furthermore, we show that a family of probabilistic scheduling algorithms are optimal for tail latency in the sense that they are able to achieve the exact tail index. The results in this paper illuminate key design issues for taming tail latency in distributed storage systems that employ erasure coding.

\bibliographystyle{IEEEtran}
\bibliography{allstorage,Tian,ref_Tian2,ref_Tian3,Vaneet_cloud,Tian_rest}

\end{document}